\newcommand{\eps}{\epsilon} 
\title{Optimal Online Two-way Trading with Bounded Number of Transactions}
\author{Stanley P. Y. Fung
}
\institute{Department of Informatics, University of Leicester,
Leicester LE1 7RH, United~Kingdom.
\email{pyf1@leicester.ac.uk}
}
\begin{document}
\maketitle

\begin{abstract}
We consider a two-way trading problem, where investors buy and sell a stock
whose price moves within a certain range. Naturally they want to
maximize their profit. Investors can perform up to $k$ trades,
where each trade must involve the full amount.
We give optimal algorithms for three different models which differ in the
knowledge of how the price fluctuates. In the first model, there are global
minimum and maximum bounds $m$ and $M$.
We first show an optimal lower bound of $\varphi$ (where $\varphi=M/m$)
on the competitive ratio for one trade, which is the bound achieved by
trivial algorithms. Perhaps surprisingly, when we consider more 
than one trade, we can give a
better algorithm that loses only a factor of $\varphi^{2/3}$ (rather than 
$\varphi$) per additional trade. Specifically, for $k$ trades the
algorithm has competitive ratio $\varphi^{(2k+1)/3}$. 
Furthermore we show that this ratio is the best possible by giving a matching
lower bound.
In the second model, $m$ and $M$ are not known in advance, and
just $\varphi$ is known. We show that this only costs us an extra factor of
$\varphi^{1/3}$, i.e., both upper and lower bounds become $\varphi^{(2k+2)/3}$.
Finally, we consider the bounded daily return model where instead of a 
global limit, the fluctuation from one day to the next is bounded, and
again we give optimal algorithms, and interestingly one of them resembles common
trading strategies that involve stop loss limits.
\end{abstract}

\section{Introduction}

\paragraph{The model.}
We consider a scenario commonly faced by investors. The price of a stock
varies over time. In this paper we use a `day' as the smallest unit of
time, so there is one new price each day.
Let $p(i)$ be the price at day $i$. The investor has some initial
amount of money. Over a time horizon of finite duration $T$, the investor wants
to make a bounded number of trades of this one stock. 
Each trade $(b,s)$ consists of a 
buy transaction at day $b$, followed by a sell transaction at day $s$
where $s>b$. (Thus one trade consists of two transactions.)
Both transactions are `all-in': when buying, the investor uses all the money 
available, and when selling all stock they currently own is sold.
A sale must be made before the next purchase can take place.
Also, no short selling is allowed, i.e., there can be no selling if
the investor is not currently holding stock.
When the end of the time horizon is reached, i.e., on the last day,
no buying is allowed and the investor must sell off all the stocks that
they still hold back to cash at the price of the day.

There are a number of rationales for considering a bounded number of trades
and/or that trades must involve all the money available.
Individual, amateur investors typically do not want to make 
frequent transactions due to high transaction fees. 
Often transaction fees have a fixed component (i.e., a fixed amount or a
minimum tariff per transaction, irrespective of the trading amount)
which makes transaction fees disproportionally high for small trades.
Frequent trading also requires constant monitoring
of the markets which amateur investors may not have the time or resources for;
often they only want to change their investment portfolios every now and then.
Also, for investors with little money available, 
it is not feasible or sensible to divide them into smaller pots of money,
in arbitrary fractions as required by some algorithms.
The finiteness of the time horizon (and that its length is possibly 
unknown as well) corresponds
to situations where an investor may be forced to sell and leave the market
due to unexpected need for money elsewhere, for example.

Each trade with a buying price of $p(b)$ and a selling price of $p(s)$ gives a 
{\it gain} of $p(s)/p(b)$. This represents how much the investor has after the
trade if they invested 1 dollar in the beginning.
Note that this is a ratio, and can be less than 1, meaning there is a loss,
but we will still refer to it as a `gain'. 
If a series of trades are made, the overall gain or the {\it return} of
the algorithm is the product of the gains of each of the individual
trades. This correctly reflects the fact that all the money after each trade
is re-invested in the next.

Since investors make decisions without knowing future stock prices,
the problem is {\it online} in nature.
We measure the performance of online algorithms with {\it competitive 
analysis}, 
i.e., by comparing it with the optimal offline algorithm OPT that
knows the price sequence in advance and can therefore make optimal decisions.
The {\it competitive ratio} of an online algorithm ONL is the worst
possible ratio of the return of OPT to the return of ONL, 
over all possible input (price) sequences.
The multiplicative nature of the definition of the return 
(instead of specifying a negative value for a loss) means that
the competitive ratio can be computed in the normal way in the case of a
loss: for example, if OPT makes a gain of 2 and ONL makes a `gain' of 1/3, 
then the competitive ratio is 6.

\paragraph{Three models on the knowledge of the online algorithm.}
We consider three different models on how the price changes, or equivalently,
what knowledge the online algorithm has in advance. In the first model,
the stock prices are always within a range [$m..M$], 
i.e., $m$ is the minimum possible price and $M$ the maximum possible price. 
Both $m$ and $M$ are known to the online algorithm up front.
In the second model, the prices still fluctuate within this range, but
$m$ and $M$ are not (initially) known; instead only their ratio
$\varphi = M/m$, called the {\it fluctuation ratio}, is known. 
In both these models the length of the time horizon (number of days)
is unknown (until the final day arrives).
In the third model, called the {\it bounded daily return model},
there is no global minimum or maximum price. Instead,
the maximum fluctuation from day to day is bounded: namely, the price
$p(i+1)$ of the next day is bounded by the price $p(i)$ of the current
day by $p(i)/\beta \le p(i+1) \le \alpha p(i)$ for some $\alpha, \beta>1$.
This means the prices cannot suddenly change a lot. 
Many stock markets implement the so-called `circuit breakers'
where trading is stopped when such limits are reached.
Here $\alpha, \beta$ and the trade duration $T$ 
are known to the online algorithm.
All three models are well-established in the algorithms
literature; see e.g. \cite{CKLW01,EFKT01}.

\paragraph{Previous results and related work.}

Financial trading and related problems are obviously important topics
and have been much studied from the online algorithms perspective.
A comprehensive survey is given in \cite{SORMS}.
Here we only sample some of the more important results and those closer to
the problems we study here.
In the {\it one-way search problem}, 
the online player chooses one moment of time to make a single transaction
from one currency to another currency. Its return is simply the price at
which the transaction takes place.
A reservation price (RP) based policy is to buy as soon as the price reaches
or goes above a pre-set {\it reservation price}.
It is well-known that, if $m$ and $M$ are known, the RP policy with 
a reservation price of $\sqrt{Mm}$ is optimal and achieves
a competitive ratio of $\sqrt{\varphi}$.
If only $\varphi$ is known, then no deterministic algorithm can achieve a ratio 
better than $\varphi$. With the help of randomization, however,
a random mix of different RPs gives a competitive ratio of $O(\log \varphi)$
if $\varphi$ is known. Even if $\varphi$ is not known, a competitive
ratio of $O(\log \varphi \cdot \log^{1+\eps}(\log \varphi))$ can be achieved. See 
\cite{EFKT01} for all the above results and more discussions.

In the {\it one-way trading problem}, the objective is again to maximize
the final amount in the other currency, but there can be multiple transactions,
i.e., not all the money has to be traded in one go.
(This distinction of terminology between search and trading is used in
\cite{EFKT01}, but is called non-preemptive vs. preemptive in \cite{SORMS}.
We prefer calling them {\it unsplittable} vs. {\it splittable} here.)
The relation between one-way trading and randomized algorithms for
one-way search is described in \cite{EFKT01}.
Many variations of one-way search or one-way trading problems have since been 
studied; 
some examples include the bounded daily return model \cite{CKLW01,ZXZD12},
searching for $k$ minima/maxima instead of one \cite{LPS09}, 
time-varying bounds \cite{DHT09}, 
unbounded prices \cite{CFG+15}, 
search with advice complexity \cite{IWOCA16}, etc.

What we study here, however, is a {\it two-way} version of the 
unsplittable trading problem%
\footnote{In the terminology of \cite{EFKT01} this should be called 
`{\it two-way search}', but we feel that the term does not convey its
application in stock market trading.}%
, which is 
far less studied. Here the online player has to first convert from one currency
(say cash) to another (a stock), hopefully at a low price, and then convert
back from the stock to cash at some later point, hopefully at a high price.
All the investment must be converted back to the first currency when or before
the game ends. This model is relevant where investors are only interested in 
short term, speculative gains.
For the models with known $m,M$ or known $\varphi$ and with one trade,
Schmidt et al. \cite{ENDM} gave a $\varphi$-competitive algorithm;
it uses the same RP for buying and selling. But consider the {\sc Do-Nothing}
algorithm that makes no trades at all. Clearly it is also
$\varphi$-competitive as ONL's gain is 1 and
OPT's gain is at most $\varphi$ (if the price goes from $m$ to $M$).
A number of common trading strategies, such as those based on moving averages,
were also studied in \cite{SORMS}. It was shown 
that they are $\varphi^2$-competitive (and not better), which are
therefore even worse.
It is easy to show that these algorithms have competitive ratios
$\varphi^k$ and $\varphi^{2k}$ respectively when extended to $k$ trades.
Schroeder et al. \cite{CoDIT16} gave some algorithms for the bounded daily
return model, without limits on the number of trades.
However, most of these algorithms 
tend to make decisions that are clearly bad, have the
worst possible performance (like losing by the largest
possible factor every day throughout), 
and have competitive ratios no better than what is given by
{\sc Do-Nothing}.

\paragraph{Our results.}

In this paper we consider the two-way unsplittable trading problem 
where a bounded number $k$ of trades are permitted, and derive optimal algorithms.
First we consider the model with known $m$ and $M$.
We begin by considering the case of $k=1$. Although some naive algorithms 
are known to be $\varphi$-competitive and seemingly nothing better is 
possible, we are not aware of any matching general lower bound.
We give a general lower bound of $\varphi$, 
showing that the naive algorithms cannot be improved.
The result is also needed in subsequent lower bound proofs.

It may be tempting to believe that nothing can beat the naive algorithm also
for more trades. Interestingly, we prove that for $k \ge 2$ this is not true.
While naive algorithms like {\sc Do-Nothing} are no better than
$\varphi^k$-competitive, we show that a
reservation price-based algorithm is $\varphi^{(2k+1)/3}$-competitive.
For example, when $k=2$, it is $\varphi^{5/3}$-competitive instead of
trivially $\varphi^2$-competitive.
Furthermore, we prove a matching lower bound, showing that the algorithm is
optimal.

Next, we consider the model where only $\varphi$ is known, and give
an algorithm with a competitive ratio of $\varphi^{(2k+2)/3}$, i.e.,
only a factor $\varphi^{1/3}$ worse than that of the preceding model.
Again we show that this bound is optimal.

Finally we consider the bounded daily return model, and give two optimal
algorithms where the competitive ratio depends on $\alpha, \beta$ and $T$.
For example, with one trade and in the symmetric case $\alpha=\beta$,
the competitive ratio is $\alpha^{2T/3}$. While this is exponential in
$T$ (which is unavoidable), naive algorithms could lose up to a factor of 
$\max(\alpha,\beta)$ every day, and
{\sc Do-Nothing} has a competitive ratio of $\alpha^T$.
One of the algorithms uses the `stop loss / lock profit' strategy 
commonly used in real trading; as far as we are aware, this is the first time 
where competitive analysis justifies this common stock market trading strategy,
and in fact suggests what the stop loss limit should be.

\section{Known $m$ and $M$}

In this section, where we consider the model with known $m$
and $M$, we can without loss of generality assume that $m=1$. This is what
we will do to simplify notations. It also means $M$ and $\varphi$ are 
equal and are sometimes used interchangeably.

\begin{theorem} \label{thm:1tradeLB}
For $k=1$, no deterministic algorithm has a competitive ratio better than 
$\varphi^{1-\eps}$, for any $\eps>0$.
\end{theorem}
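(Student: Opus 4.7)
The plan is to use an adaptive-adversary argument. Given any deterministic online algorithm $A$ and any $\epsilon>0$, I will construct a price sequence (reactively) on which $\mathrm{OPT}/A\ge\varphi^{1-\epsilon}$. The guiding intuition: a single trade of $A$ consists of both a buy and a sell decision, and the adversary can shape the prices so that one decision is forced to miss a factor close to $\varphi$ while the other is pinned at the lower endpoint, making the total loss compound. I work on the geometric grid $\{\varphi^{i/N}\}_{i=0}^N$ with $N$ chosen large enough that $\varphi^{1/N}\le\varphi^\epsilon$, which lets the adversary position prices precisely relative to $A$'s decision points.

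The bad sequence follows the template
\[
p_1,\ \varphi,\ p_3,\ p_4,\ \ldots,\ p_b,\ 1,\ 1,\ \ldots,\ 1,
\]
where $p_1$ is a pivot, $p_2=\varphi$, the $p_3,p_4,\ldots$ descend by factor $\varphi^{1/N}$ per day, $p_b$ is the first descending price at which $A$ commits to its buy, and the tail is $1$'s. On this sequence, OPT buys on day~$1$ at $p_1$ and sells on day~$2$ at $\varphi$ for gain $\varphi/p_1$. Meanwhile, $A$'s single trade is pinned: no price after day~$b$ exceeds $1$, so $A$'s sell is at most $1$ (whether it sells voluntarily or by the forced end-of-horizon sale), giving gain $\le 1/p_b$. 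The ratio is $\varphi p_b/p_1$, which is at least $\varphi^{1-1/N}\ge\varphi^{1-\epsilon}$ whenever $p_b$ is within one grid step of $p_1$. The adversary chooses $p_1$ by pre-simulating $A$: it iterates over grid choices of $p_1$ and picks one for which $A$ does not buy on day~$1$ but commits to a buy early in the descent (so $p_b$ is close to $p_1$).

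The main obstacle is that $A$ need not be threshold-based: its day-$3$ buy decision depends on the full history $(p_1,\varphi)$, so naive choices of $p_1$ might cause $A$ to hold off buying until deep into the descent, making $p_b\ll p_1$ and shrinking the ratio. I would address this by a case analysis over $A$: for every deterministic $A$, at least one of the following adversarial strategies yields a ratio $\ge\varphi^{1-\epsilon}$: (a) the template above, for some grid choice of $p_1$ whose corresponding $p_b$ sits within one grid step (guaranteed to exist by a finite case-check over the $N$-grid, using the determinism of $A$); (b) the degenerate sequence $\varphi,1,1,\ldots,1$, which yields ratio $\ge\varphi$ whenever $A$ would buy on day~$1$ for every choice of $p_1$; (c) simply truncating the sequence after day~$2$, yielding ratio $\varphi/p_1\ge\varphi^{1-\epsilon}$ whenever $A$ never buys during the descent for any $p_1$. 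A careful enumeration shows that for every deterministic $A$ one of these scenarios applies; the $\varphi^\epsilon$ slack absorbs the grid discretization.
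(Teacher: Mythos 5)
There is a genuine gap, and it sits exactly where you flagged the ``main obstacle.'' Your template has a single peak ($\varphi$ on day~2) and therefore a single buying opportunity for OPT (day~1 at $p_1$), which must be committed to \emph{before} the adversary learns where $A$ buys. The claim that a good grid choice of $p_1$ ``is guaranteed to exist by a finite case-check'' is false. Consider the deterministic algorithm $A^*$ that buys on day~1 if and only if $p_1 \le \varphi^{1/2}$, sells whenever the price reaches $\varphi$, and during any descent refuses to buy above $\varphi^{1/4}$ (say). For $p_1 \le \varphi^{1/2}$, $A^*$ buys immediately at a price at most $\varphi^{1/2}$ above the minimum, so no continuation (crash to $m$, rise to $\varphi$, or truncation) yields a ratio better than $\varphi^{1/2}$; for $p_1 > \varphi^{1/2}$, OPT's only available trade in your template gains $\varphi/p_1 < \varphi^{1/2}$, and $A^*$'s descent behavior keeps $p_b$ far from $p_1$, so scenario (a) gives $\varphi p_b/p_1$ which you cannot force above $\varphi^{1-\eps}$, while scenarios (b) and (c) do not apply (their hypotheses quantify over \emph{all} $p_1$, and $A^*$'s day-1 decision is $p_1$-dependent). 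So the three-way case split does not exhaust the deterministic algorithms, and no choice of $p_1$ rescues the single-peak template against $A^*$.

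The missing idea --- and the heart of the paper's construction --- is to \emph{interleave} the peak price $M$ with the descending probe prices: the sequence is $v_{n-1}, M, v_{n-2}, M, \ldots, v_1, M, v_0$ with $v_i = M^{i/n}$, truncated to a crash to $m$ the moment $A$ buys. This way OPT's reference trade is not pinned to day~1; whichever $v_i$ the algorithm buys at, OPT buys at $v_{i+1}$ (shown two days earlier, only one grid step higher) and sells at the $M$ immediately following it, yielding gain $M/v_{i+1}$ against $A$'s gain $m/v_i$, i.e.\ ratio $M^{1-1/n}$ uniformly in $i$; and if $A$ never buys, the final pair $v_1, M$ already gives OPT a gain of $M^{1-1/n}$. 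Your grid, your crash-on-buy mechanism, and your $\eps$-slack bookkeeping are all the right ingredients, but without the interleaved maxima the adversary cannot make OPT's buy point track $A$'s, and the lower bound degrades to roughly $\sqrt{\varphi}$ against threshold-style algorithms.
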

\begin{proof}
Choose $n = \lceil 1/\eps \rceil$ and define $v_i = M^{i/n}$ for 
$i=0, 1, \ldots, n$.
The following price sequence is released until ONL buys:
$v_{n-1}, M, v_{n-2}, M, \ldots, v_i$, $M, \ldots, v_1$, $M, v_0$.
If ONL does not buy at any point, or buys at price $M$, then its return is 
at most 1. Then OPT buys at $v_1$ and sells at $M$ to get a return of $M^{1-1/n}$.
So suppose ONL buys at $v_i$ for some $1 \le i \le n-1$. 
(It cannot buy at $v_0$ as it is the last time step.)
As soon as ONL bought, the rest of the sequence is not released; instead
the price drops to $m$ and the game ends.
ONL's return is $m/v_i$. If $i=n-1$, then OPT makes no trade and its return is 1,
so competitive ratio = $v_{n-1}/m = M^{1-1/n}$. Otherwise, if $i<n-1$,
OPT buys at $v_{i+1}$ (two days before ONL's purchase)
and sells at the next day at price $M$, giving a return of $M/v_{i+1}$.
The competitive ratio is therefore $Mv_i/(mv_{i+1}) = M^{1-1/n}$.

Thus in all cases the competitive ratio is at least 
$M^{1-1/n} \ge \varphi^{1-\eps}$.
\qed \end{proof}

Note that the proof does not require ONL to use only one trade: it cannot
benefit even if it is allowed to use more trades. This fact will
be used later in Theorems~\ref{thm:ktradeLB} and~\ref{thm:ktradeLB2}.

For $k>1$, we analyze the following algorithm:

\begin{algorithm}[h]
\caption{The reservation price algorithm, with known price range $[m..M]$}
\label{alg:1}
\begin{algorithmic}
\STATE Upon release of the $i$-th price $p(i)$:
\IF{$i=T$} 
  \IF{currently holding stock}
    \STATE sell at price $p(T)$ and the game ends
  \ENDIF
\ELSE
  \IF{$p(i) \le M^{1/3}$ and currently not holding stock and not used up all trades} 
    \STATE buy at price $p(i)$
  \ELSIF{$p(i) \ge M^{2/3}$ and currently holding stock}
    \STATE sell at price $p(i)$
  \ENDIF
\ENDIF
\end{algorithmic}
\end{algorithm}

\begin{theorem} \label{thm:ktradeUB}
Algorithm \ref{alg:1} has competitive ratio $\varphi^{(2k+1)/3}$, for $k \ge 1$.
\end{theorem}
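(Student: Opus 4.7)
My plan is to classify each OPT trade as either \emph{big}---buy price $\le M^{1/3}$ and sell price $\ge M^{2/3}$, giving gain at most $M$---or \emph{small}, in which case a short case analysis on which of these two thresholds is violated yields gain strictly less than $M^{2/3}$. For ONL, let it make $j$ normal (reservation-price-completed) trades plus $f \in \{0,1\}$ forced trades on the last day, so that $j + f \le k$. Every normal trade has gain at least $M^{2/3}/M^{1/3} = M^{1/3}$, and the forced trade (with buy $\le M^{1/3}$ and sell $\ge 1$) has gain at least $M^{-1/3}$, so $R_{\mathrm{ONL}} \ge M^{(j-f)/3}$.

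The central step is counting OPT's big trades. Because Algorithm~\ref{alg:1} buys on every day of price $\le M^{1/3}$ when not holding (and still has trades available) and sells on every day of price $\ge M^{2/3}$ when holding, a day can have price $\le M^{1/3}$ only inside a ``low zone'' $L_i = [B_i, S_i - 1]$ around ONL's $i$th trade (with an extra $L_{j+1} = [B_{j+1}, T]$ in the forced case), where $B_i, S_i$ are ONL's $i$th buy and sell days; symmetrically, high prices only occur in ``high zones'' $H_i$ sandwiched between successive ONL trades. These zones strictly alternate in time. A big OPT trade must buy in some $L_a$ and sell in some later $H_b$ with $a \le b$; since OPT's trades are time-disjoint and each $L_a$ itself contains no high-price day, a second big trade buying in the same $L_a$ would require its buy day to precede the previous sell day, a contradiction, so each low zone supports at most one big trade. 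In the forced case, moreover, $L_{j+1}$ has no subsequent high zone and is unusable. Hence the number of big trades is at most $j$ in both cases.

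Letting $p \le j$ be the number of big trades, $R_{\mathrm{OPT}} \le M^{p} \cdot M^{2(k-p)/3} = M^{(2k+p)/3}$, so the ratio is at most $M^{(2k+p)/3 - (j-f)/3} \le M^{(2k+f)/3} \le M^{(2k+1)/3}$, with tightness when $f=1$. The main obstacle will be making the zone structure precise---in particular, the boundary case where ONL exhausts all $k$ trades and the final out-segment becomes unconstrained (so the low-zone characterisation breaks down), which I would handle separately via the trivial bounds $R_{\mathrm{OPT}} \le M^k$ and $R_{\mathrm{ONL}} \ge M^{k/3}$, giving ratio $M^{2k/3} \le M^{(2k+1)/3}$ directly.
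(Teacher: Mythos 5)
Your proposal is essentially the paper's own argument, reorganized as a single counting lemma instead of the paper's case analysis on ONL's win/loss pattern (nil, $L$, $W^jL$, $W^j$, $W^k$): the paper's observation that OPT can make at most $j$ trades of gain $M$ (one ``crossing'' each ONL sell day, all others capped at $M^{2/3}$) is exactly your ``at most one big trade per low zone'' count, and the arithmetic is identical. The argument is sound, but one boundary case beyond the one you flagged needs a word: in the forced case $L_{j+1}$ is \emph{not} automatically unusable, because day $T$ itself may carry a price $\ge M^{2/3}$ at which OPT, having bought low inside $L_{j+1}$, sells; in that event, however, ONL's forced sale happens at that same price, so the ``forced'' trade in fact gains at least $M^{1/3}$, the ratio drops to $M^{2k/3}$, and the bound survives.
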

\begin{proof} 
First we make a few observations. 
We call a trade {\it winning} if its gain is higher than 1, and {\it losing}
otherwise.
Any winning trade made by ONL has a gain of at least $M^{1/3}$.
If the algorithm makes a losing trade, it must be a forced sale at the end
and the gain is not worse than $M^{-1/3}$. Moreover, it follows that
the algorithm cannot trade anymore after it.

We consider a number of cases separately based on the sequence of win/loss
trades. If we use $W$ and $L$ to denote a winning and a losing trade
respectively, then following the above discussion, the possible cases are: 
nil (no trade), $L$, $W^{j}L$ for $1 \le j \le k-1$, and
$W^{j}$ for $1 \le j \le k$. (Here $W^j$ denotes a sequence of $j$ consecutive
$W$'s.)

\begin{figure}
\begin{center}
\includegraphics[width=10.5cm]{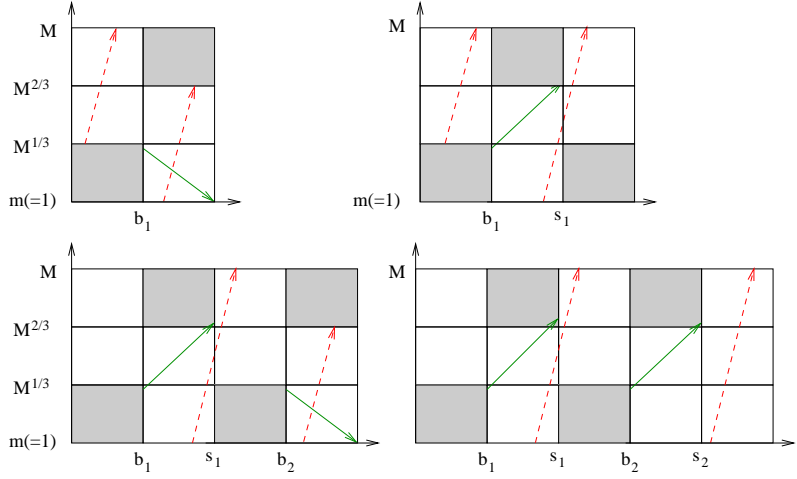}
\caption{
Four cases illustrated, for $k=2$. Horizontal
axis is time, vertical axis is price. Shaded regions are the regions where
the prices cannot fall into. 
Green solid arrows depict possible buying and selling actions of ONL, 
red dashed arrows for OPT. Top left: case L, Top right: case W,
Bottom left: case WL, Bottom right: case WW.}
\label{fig:1}
\end{center}
\end{figure}

\begin{description}

\item[Case nil:] 
Since ONL has never bought, the prices were never at or below $M^{1/3}$ 
(except possibly
the last one, but neither OPT nor ONL can buy there) and hence OPT's return cannot
be better than $(M/M^{1/3})^k = M^{2k/3}$. ONL's return is 1. 
So the competitive ratio is at most $M^{2k/3}$.

\item[Case $L$:] Suppose ONL buys at time $b_1$ and is forced to sell at the end.
The prices before $b_1$ cannot be lower than $M^{1/3}$ (or else it would have 
bought) and the prices after $b_1$ cannot be higher than $M^{2/3}$
(or else it would have sold). Thus, it is easy to see (Figure~\ref{fig:1})
that OPT cannot make any trade with gain higher than $M^{2/3}$.
So the competitive ratio is at most $(M^{2/3})^k / M^{-1/3} = M^{(2k+1)/3}$.

\item[Case $W$:] Suppose ONL buys at time $b_1$ and sells at time $s_1$. Then
before $b_1$, the prices cannot be below $M^{1/3}$; between $b_1$ and $s_1$, 
the prices cannot be higher than $M^{2/3}$; and after $s_1$, the prices 
cannot be lower than $M^{1/3}$. It can be seen from Figure~\ref{fig:1} that 
OPT can make at most one trade with gain $M$ (crossing time $s_1$); any other 
trade it makes must be of gain at most $M^{2/3}$. So the competitive ratio 
is at most $M (M^{2/3})^{k-1}/M^{1/3} = M^{2k/3}$.

\item[Case $W^{j}L$, $1 \le j \le k-1$:]
Similarly, we can partition the timeline into regions (Figure~\ref{fig:1}), 
from which we can see that OPT can make at most $j$ trades of gain $M$ and 
the rest have gain at most $M^{2/3}$. Thus competitive ratio = 
$(M^j (M^{2/3})^{k-j})/((M^{1/3})^j M^{-1/3}) = M^{(2k+1)/3}$.

\item[Case $W^j$, $1 < j \le k-1$:] 
This can only be better than the previous case,
as OPT again can make at most $j$ trades of gain $M$ and the rest have
gain at most $M^{2/3}$, but ONL's return is better than the previous case.

\item[Case $W^k$:] 
In this case the competitive ratio is simply $M^k / (M^{1/3})^k = M^{2k/3}$. \qed
\end{description}
\end{proof}

\begin{theorem} \label{thm:ktradeLB}
No deterministic algorithm has a competitive ratio better than 
$\varphi^{(2k+1)/3-\eps}$, for any $\eps>0$ and $k \ge 1$.
\end{theorem}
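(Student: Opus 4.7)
The plan is as follows. For $k=1$ the statement reduces to Theorem~\ref{thm:1tradeLB}. For $k \ge 2$ I would construct an adaptive adversary by chaining together $k$ rescaled copies of the staircase from Theorem~\ref{thm:1tradeLB}. Fix a large integer $n$ (to be chosen in terms of $k$ and $\epsilon$). The adversary plays $k$ consecutive \emph{gadgets}; gadget $j$ lives in a sub-range $[L_j, H_j]$ of the price range $[1,\varphi]$, inside which the adversary releases $v_{n-1}, H_j, v_{n-2}, H_j, \ldots, v_1, H_j, v_0$ with $v_i = L_j (H_j/L_j)^{i/n}$, and reacts to any ONL buy at some $v_i$ by immediately dropping the price to $L_j$ and moving on to the next gadget. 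I would use $[L_j,H_j] = [\varphi^{1/3},\varphi]$ (fluctuation $\varphi^{2/3}$) for $j=1,\ldots,k-1$ and $[L_k,H_k] = [1,\varphi]$ (fluctuation $\varphi$) for the last gadget, so that $\prod_j (H_j/L_j) = \varphi^{(2k+1)/3}$.

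The first step of the analysis is a per-gadget inequality obtained by re-running the calculation of Theorem~\ref{thm:1tradeLB} inside the sub-range: for each gadget of fluctuation $R_j$, OPT has a trade straddling ONL's decision point with gain at least $R_j^{1-1/n}$, while ONL's activity \emph{confined to that gadget} yields gain at most $1$ (if ONL does not buy) or at most $R_j^{-1/n}$ (if ONL buys and is then forced to sell at $L_j$ inside the gadget). Multiplying across $j=1,\ldots,k$ gives a ratio of at least $\prod_j R_j^{1-1/n} = \varphi^{((2k+1)/3)(1-1/n)}$, which exceeds $\varphi^{(2k+1)/3-\epsilon}$ once $n$ is taken large enough.

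The main obstacle is handling ONL trades that \emph{straddle} gadget boundaries: an ONL that buys at some $v_i$ in gadget $j$ but does not sell until a later gadget $j' > j$ sidesteps the drop-to-$L_j$ response, and the per-gadget bound above no longer directly applies. This is the delicate part of the proof. I would address it by (i) using the specific choice of $L_j, H_j$ to show that the maximum achievable cross-gadget gain is at most $\varphi^{2/3 - \Theta(1/n)}$, so a straddle is no more profitable than a within-gadget trade of the same fluctuation; (ii) a bookkeeping/pigeonhole argument that, since ONL has only $k$ trades distributed across $k$ gadgets, each trade consumed on a straddle forfeits ONL's opportunity to react in one of the traversed gadgets, so the per-gadget factors still multiply correctly; and (iii) invoking the remark after Theorem~\ref{thm:1tradeLB} that its lower bound continues to hold even when ONL is allowed additional trades, which is needed to handle the eventual sale of any carried-over position inside the final ``large'' gadget. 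Combining these gives the desired bound $\varphi^{(2k+1)/3-\epsilon}$.
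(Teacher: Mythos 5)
There is a genuine gap, and it sits exactly where you flag ``the delicate part'': the straddling trades. In your construction every gadget $j \le k-1$ opens with a price $v_{n-1}\approx H_j=\varphi$, and the adversary's only punishment for an ONL purchase is to drop the price to $L_j=\varphi^{1/3}$ and move on. So ONL can buy at $v_1\approx L_j$ late in gadget $j$, hold through the drop, and sell at the opening price $\approx\varphi$ of gadget $j+1$, realizing a gain of $\varphi^{2/3-O(1/n)}$ --- the same factor OPT gains per gadget. Your fix (i) confirms that the cross-gadget gain is at most $\varphi^{2/3-\Theta(1/n)}$, but that bound does not save the argument: it means ONL keeps pace with OPT over the two gadgets the trade spans, so the ratio accumulated there is only about $\varphi^{2/3}$ instead of the required $(\varphi^{2/3})^2$. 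With $\lfloor k/2\rfloor$ such straddles ONL's return is about $\varphi^{k/3}$ against OPT's $\varphi^{(2k+1)/3}$, a ratio of only about $\varphi^{(k+1)/3}$. Point (ii) does not repair this: ONL does not need to ``react'' inside every gadget, it only needs not to fall behind, and a straddle achieves that. The root cause is structural: a staircase gadget must contain prices near its top (so OPT profits when ONL abstains), and those same high prices are available to an ONL holding stock bought cheaply in the previous gadget. (A smaller inaccuracy: your per-gadget inequality is misstated --- Theorem~\ref{thm:1tradeLB} gives a \emph{ratio} of $R^{1-1/n}$, with OPT gaining $R^{1-(i+1)/n}$ and ONL gaining $R^{-i/n}$ when ONL buys at $v_i$; OPT's gain alone need not be $R^{1-1/n}$.)

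The paper's construction avoids the loophole by not chaining staircases at all. Each intermediate round has two adaptively chosen phases: while ONL holds cash, the price oscillates between $M^{1/3}$ and $M$ (high prices are useless to a non-holder); the moment ONL buys, the adversary withholds all prices above $M^{2/3}$, oscillating between $m$ and $M^{2/3}$ until ONL sells, and only then jumps to $M$ for OPT's benefit. This caps ONL's gain per round at $M^{1/3}$ while OPT gains $M$, yielding a clean factor of $M^{2/3}$ per round with no straddling escape (if ONL refuses to sell, the game ends at $m$ and ONL eats a loss while OPT spends its remaining trades on the $m$-to-$M^{2/3}$ oscillation). The Theorem~\ref{thm:1tradeLB} staircase is used only once, in the final round, to collect the last factor of $M^{1-\eps}$. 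To rescue your plan you would have to make each gadget's high prices conditional on ONL not currently holding stock, which in effect reconstructs the paper's two-phase rounds.
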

\begin{proof}
The prices are released in up to $k$ rounds. The final round $k$ is a special
round. For all other rounds, we maintain the following invariants. 
For each $1 \le i \le k-1$, just before the $i$-th round starts, 
OPT completed exactly $i-1$ trades, is holding no stock, and
accumulated a return of exactly $M^{i-1}$, while ONL completed at most $i-1$
trades, is holding no stock, and
accumulated a return of at most $M^{(i-1)/3}$.
So the competitive ratio up to this point is at least $M^{2(i-1)/3}$.

For any $i<k$, round $i$ begins with the price sequence 
$M^{1/3}, M, M^{1/3}, M, \ldots$ until either ONL buys or $k-i$ such pairs 
of oscillating prices have been released. 
If ONL does not buy at any point, then the round ends.
Clearly, ONL maintains its variants.
OPT makes $k-i$ trades giving a total gain of $(M^{2/3})^{k-i}$ in this round, 
and thus the accumulated competitive ratio is $M^{2(k-1)/3}$. 
It also used $(i-1)+(k-i) = k-1$ trades.
Any remaining intermediate rounds are then skipped and we jump directly to the
special last round $k$.

Otherwise, assume ONL buys at one of the $M^{1/3}$ prices
($M$ is clearly even worse). 
The rest of that sequence will not be released. Instead, the price
sequence that follows is $m, M^{2/3}, m, M^{2/3}, \ldots$ until either ONL
sells or $k-i+1$ such pairs of oscillating prices were released.
If ONL does not sell at any of these, then the price drops to $m$ and the game 
ends (with no further rounds, not even the special round). ONL's gain in 
this round is $M^{-1/3}$. OPT uses all its remaining $k-i+1$ trades and gains 
$(M^{2/3})^{k-i+1}$.
Combining with the previous rounds, the competitive ratio is at most
$M^{2(i-1)/3} M^{2(k-i+1)/3} / M^{-1/3} = M^{(2k+1)/3}$.

Otherwise ONL sells at one of the $M^{2/3}$ prices ($m$ is even worse). 
The rest of that sequence will not be released; instead the price goes up to 
$M$ and this round ends. OPT's gain in this round is $M$ by making one trade 
from $m$ to $M$; ONL gains $M^{1/3}$. 
Thus the invariants are maintained and we move on to the next round.
(Regarding the invariant that ONL is not holding stock at the end of the
round, we can assume w.l.o.g. that ONL does not buy at the last price $M$, 
since clearly it cannot make a profit doing so. In any case, even if it does 
buy, it can be treated as if it were buying at the beginning of the next round.)

Finally, if we arrive at round $k$, then the same price sequence as in
Theorem~\ref{thm:1tradeLB} is used to give 
an additional factor of $M^{1-\eps}$ to the competitive ratio. 
Note that at the start of this round, OPT has one trade left, and ONL has one
or more trades left, but that will not help. Thus the 
competitive ratio is not better than $M^{2(k-1)/3+1-\eps} = M^{(2k+1)/3-\eps}$.
\qed \end{proof}

\section{Known $\varphi$ only}

For $k=1$ {\sc Do-Nothing} is clearly still $\varphi$-competitive, and
Theorem~\ref{thm:1tradeLB} still applies here, so we focus on $k>1$.
We adapt Algorithm~\ref{alg:1} by buying only when it is certainly 
`safe', i.e., when it is certain that the price is within the lowest 
$\varphi^{1/3}$ of the actual price range, and sells when it gains 
$\varphi^{1/3}$. The formal description is given 
in Algorithm~\ref{alg:2}.
Let $M_t$ be the maximum price observed up to and including day $t$.
Note that $M_t$ is a stepwise increasing function of $t$.

\begin{algorithm}[h]
\caption{The reservation price algorithm, with known $\varphi$}
\label{alg:2}
\begin{algorithmic}
\STATE Upon release of the $i$-th price $p(i)$:
\IF{$i=T$}
  \IF{currently holding stock}
    \STATE sell at price $p(T)$ and the game ends.
  \ENDIF
\ELSE
  \IF{$i=1$}
    \STATE $M_1 := p(1)$
  \ELSE
    \STATE $M_i := \max(M_{i-1}, p(i))$
  \ENDIF
  \IF{$p(i) \le M_i/\varphi^{2/3}$ and currently not holding stock and
      not used up all trades} 
    \STATE buy at price $p(i)$
  \ELSIF{currently holding stock bought at price $p(b)$ and $p(i) \ge 
\varphi^{1/3}p(b)$}
    \STATE sell at price $p(i)$
  \ENDIF
\ENDIF
\end{algorithmic}
\end{algorithm}

\begin{theorem} \label{thm:ktradeUB2}
Algorithm \ref{alg:2} has competitive ratio $\varphi^{(2k+2)/3}$, for any $k \ge 2$.
\end{theorem}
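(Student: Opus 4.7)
The plan is to follow the case-analysis structure of the proof of Theorem~\ref{thm:ktradeUB} for Algorithm~1, enumerating sequences of the algorithm's winning (W) and losing (L) trades, but adapted to Algorithm~2's thresholds that depend on the running maximum $M_t$. I first establish two base observations analogous to those for Algorithm~1: each winning trade has gain at least $\varphi^{1/3}$ directly from the sell condition, and each losing (forced-sale) trade has gain at least $\varphi^{-1/3}$ since $p(b)\le M_b/\varphi^{2/3}\le M^*/\varphi^{2/3}=m^*\varphi^{1/3}$ (using $M^*\le \varphi m^*$, where $M^*, m^*$ denote the realized global maximum and minimum prices) while the forced sell price is at least $m^*$. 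This yields the same case enumeration as for Algorithm~1: nil, $L$, $W^jL$ for $1\le j\le k-1$, and $W^j$ for $1\le j\le k$.

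The central technical ingredient is a \emph{Not-Holding Lemma}: if OPT makes $k'$ trades entirely within a time interval throughout which the algorithm neither holds stock nor buys (so $p(t)>M_t/\varphi^{2/3}$ at every $t$), then their total gain is at most $\varphi\cdot \varphi^{2(k'-1)/3}=\varphi^{(2k'+1)/3}$. The argument rewrites $\prod_j p(s'_j)/p(b'_j)\le [M_{s'_{k'}}/p(b'_1)]\cdot \prod_{j<k'}[M_{s'_j}/p(b'_{j+1})]$, bounds each interior factor by $\varphi^{2/3}$ via $p(b'_{j+1})>M_{b'_{j+1}}/\varphi^{2/3}\ge M_{s'_j}/\varphi^{2/3}$, and bounds the outer factor by $M^*/m^*=\varphi$. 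Complementing this, within any holding interval $[b_i,s_i)$ prices stay below $\varphi^{1/3}p(b_i)\le \varphi^{2/3}m^*$, so any OPT trade contained in a single holding interval (or straddling its start $b_i$ but selling strictly before $s_i$) has gain less than $\varphi^{2/3}$; only an OPT trade ending at or spanning the algorithm's sell time $s_i$ can reach gain up to $\varphi$, and there is at most one such per $s_i$.

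Each case is then dispatched by partitioning OPT's trades among these regions. The tight case is $L$: the trades in $[1,b_1)$ contribute at most $\varphi\cdot \varphi^{2(k_0-1)/3}$ by the Not-Holding Lemma, while trades inside $[b_1,T]$ or straddling $b_1$ each contribute at most $\varphi^{2/3}$, so OPT's total is at most $\varphi\cdot \varphi^{2(k-1)/3}=\varphi^{(2k+1)/3}$; dividing by the losing-trade gain of at least $\varphi^{-1/3}$ yields exactly $\varphi^{(2k+2)/3}$. Case $W^jL$ generalizes: the timeline splits into $j+1$ not-holding and $j+1$ holding regions of the algorithm, and a careful accounting shows OPT realizes at most $j+1$ factor-$\varphi$ contributions in total---one per non-empty not-holding region's ``outer factor'', with any OPT trade spanning $s_i$ absorbed into the outer factor of the following not-holding region---while the remaining $k-j-1$ trades are each bounded by $\varphi^{2/3}$, yielding OPT $\le \varphi^{j+1}\cdot \varphi^{2(k-j-1)/3}=\varphi^{(j+2k+1)/3}$; divided by the algorithm's $\varphi^{(j-1)/3}$ this again gives $\varphi^{(2k+2)/3}$. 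Cases nil and $W^j$ ($1\le j\le k$) give only $\varphi^{(2k+1)/3}$, strictly below the claimed ratio.

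The main obstacle I anticipate is the accounting in case $W^jL$ for why the $\varphi$-factor budget is exactly $j+1$ and not larger: an OPT trade spanning an algorithm sell $s_i$ and selling in the subsequent $R_i^n$ superficially looks like an extra $\varphi$ factor, but in fact it ``uses up'' the $\varphi$ bonus of the Not-Holding Lemma applied to $R_i^n$, because the spanning sell price close to $M^*$ forces every subsequent OPT buy $b'$ in $R_i^n$ to satisfy $p(b')>M_{b'}/\varphi^{2/3}$ with $M_{b'}$ near $M^*$, collapsing the would-be outer factor from $\varphi$ to $\varphi^{2/3}$. Making this absorption argument precise so that the total count of large factors across all regions is exactly $j+1$ is the most delicate part of the proof.
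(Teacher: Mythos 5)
Your building blocks are exactly the paper's: the same ONL gain bounds ($\varphi^{1/3}$ per winning trade, $\varphi^{-1/3}$ for the one forced sale), the same $\varphi^{2/3}$ bound for any OPT trade that sells while ONL is holding, the key telescoping inequality $p(b'_{j+1})>M_{b'_{j+1}}/\varphi^{2/3}\ge M_{s'_j}/\varphi^{2/3}\ge p(s'_j)/\varphi^{2/3}$, and a final count of at most $j+1$ full-$\varphi$ factors, which matches the paper's arithmetic. The difference is organizational, and it is where the unresolved step sits. The paper never partitions the timeline into holding/non-holding regions. It classifies each OPT trade globally: (1) contains an ONL sell event (at most one per sell event, gain at most $\varphi$); (2) no sell event and ONL holding or buying at $b^*$ (gain at most $\varphi^{2/3}$); (3) no sell event, ONL idle at $b^*$, $M_t$ unchanged over $(b^*,s^*]$ (gain at most $\varphi^{2/3}$); (4) no sell event, ONL idle at $b^*$, $M_t$ changes. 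It then telescopes over \emph{all} type-(4) trades in chronological order in a single product. Because $M_t$ is globally monotone, consecutive type-(4) trades chain together even when separated by holding periods and sell events, so the whole class contributes $\varphi\cdot(\varphi^{2/3})^{x-1}$ --- exactly one outer $\varphi$ in total --- with no bookkeeping about which non-holding interval each trade lives in.

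Your per-region Not-Holding Lemma discards precisely this global monotonicity: applied independently it hands out one $\varphi$ per region, and the ``absorption'' is your attempt to claw the over-count back. That step, which you yourself flag as not yet precise, is the genuine gap: as sketched it covers only the configuration where a trade spanning $s_i$ sells inside the next non-holding region at a price near the global maximum, and every other configuration you would need (a spanning trade with large gain whose sell price is not near $M^*$; a within-region trade followed by a spanning trade out of the same region; the continuous trade-off where the spanning trade gains $\varphi^{0.9}$ and the next region's outer factor is $\varphi^{0.8}$) is verified only by re-running the same telescoping chain through $M_t$. In other words, every instance of the absorption \emph{is} the global telescoping in disguise, so the clean fix is not to make the region-by-region charging precise but to abandon the region decomposition for the ONL-idle trades altogether and chain them in one product, as the paper does. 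The rest of your proposal --- cases nil, $L$, $W^k$, and the closing computation $\varphi^{(j+2k+1)/3}/\varphi^{(j-1)/3}=\varphi^{(2k+2)/3}$ --- is correct and agrees with the paper.
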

\begin{proof}
Clearly ONL gets the same as in Theorem~\ref{thm:ktradeUB}:
each winning trade has gain at least $\varphi^{1/3}$ and a losing trade, 
which can only appear as the last trade, has gain at least $\varphi^{-1/3}$. 
The difference is in how we bound OPT's gain.

In the case of $W^k$ (ONL makes $k$ winning trades) then the same argument
as Theorem~\ref{thm:ktradeUB} applies, so in the following we only consider the
case where ONL did not use up all its trade, i.e., it is always able to buy if
it is not holding.

A {\it sell event} happens at a day when ONL sells and makes a 
profit (i.e., excludes the forced sale at the end).
An {\it M-change event} happens at day $t$ if $M_t \neq M_{t-1}$.
Each OPT trade ($b^*, s^*)$ can be classified into one of the following types:

\begin{enumerate}

\item[(1)] There is at least one sell event during $[b^*, s^*]$.
Clearly the number of such OPT trades is limited by the number of sell events.
Each such trade can gain up to $\varphi$.

\item[(2)] There is no sell event during $[b^*, s^*]$, and at $b^*$ ONL is holding
or buying.
Suppose ONL's most recent purchase is at time $b \le b^*$. Then
$p(b) \le M_b/\varphi^{2/3} \le M_{b^*}/\varphi^{2/3}$. 
It is holding stock throughout
and still did not sell at $s^*$ (or is forced to sell if $s^*$ is the last day), 
hence $p(s^*) < p(b) \varphi^{1/3} \le M_{b^*}/\varphi^{1/3}$.
But clearly $p(b^*) \ge M_{b^*}/\varphi$, hence the gain of OPT is at most
$\varphi^{2/3}$.

\item[(3)] There is no sell event during $[b^*, s^*]$, 
at $b^*$ ONL is neither holding nor buying, 
and there is no M-change event in $(b^*, s^*]$.
We have $p(b^*) > M_{b^*}/\varphi^{2/3}$ as otherwise ONL would have bought at 
$b^*$. Clearly $p(s^*) \le M_{s^*} = M_{b^*}$.
Hence the gain of OPT is $p(s^*)/p(b^*) < \varphi^{2/3}$.

\item[(4)] There is no sell event during $[b^*, s^*]$, 
at $b^*$ ONL is neither holding nor buying, 
and there is/are M-change event(s) in $(b^*, s^*]$.
Suppose there are a total of $x$ such OPT trades, 
$(b^*_1, s^*_1), (b^*_2, s^*_2), \ldots, (b^*_x, s^*_x)$, in chronological order.
Note that $p(b^*_i) > M_{b^*_i}/\varphi^{2/3}$ or else ONL would have bought 
at $b^*_i$. So for all $i$,
$p(b^*_{i+1}) > M_{b^*_{i+1}}/\varphi^{2/3} \ge M_{s^*_i}/\varphi^{2/3} \ge 
p(s^*_i)/\varphi^{2/3}$.
Thus the total gain of these $x$ trades is
\[ \prod_{i=1}^x \frac{p(s^*_i)}{p(b^*_i)}
= \frac{1}{p(b^*_1)} \frac{p(s^*_1)}{p(b^*_2)} \cdots
\frac{p(s^*_{x-1})}{p(b^*_x)} \frac{p(s^*_x)}{1}
< \frac{p(s^*_x)}{p(b^*_1)}(\varphi^{2/3})^{x-1}
\le \varphi (\varphi^{2/3})^{x-1} = 
\varphi^{(2x+1)/3}. \]
\end{enumerate}

Suppose ONL makes $y$ winning trades and one losing trade. 
Then OPT makes at most $y$ trades of type (1), gaining at most $\varphi^{y}$ 
from those. Then, if $x$ of OPT's trades are of type (4), they in total gives
another gain of at most $\varphi^{(2x+1)/3}$. The remaining trades are
of types (2) and (3), gaining $\varphi^{2/3}$ each. 
The competitive ratio is therefore at most
\[ \frac{ \varphi^y \varphi^{(2x+1)/3} \varphi^{2(k-x-y)/3} }
        { \varphi^{y/3} \varphi^{-1/3} }
= \varphi^{ (2k+2)/3 }. \]

If ONL makes $y<k$ winning trades and no losing trade, the competitive ratio
can only be better, as OPT's return is as above but ONL's is $\varphi^{1/3}$ better.
\qed \end{proof}

\begin{theorem} \label{thm:ktradeLB2}
No deterministic algorithm is better than $\varphi^{(2k+2)/3-\eps}$-competitive,
for any $\eps>0$ and $k \ge 2$.
\end{theorem}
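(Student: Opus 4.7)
The plan is to extend the adversary argument of Theorem~\ref{thm:ktradeLB} by exploiting the algorithm's uncertainty about $m$ and $M$. In the unknown model ONL has to estimate the true $M$ from the prices it has already seen, so the adversary has an extra degree of freedom: it can release a prefix that looks consistent with a narrow range and later ``extend'' the range by releasing more extreme prices, thereby making ONL's earlier decisions (which were reasonable given the then-visible prices) look suboptimal in hindsight.

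First, I would prepend a preparation round (``Round~0'') designed to extract exactly an extra factor of $\varphi^{1/3}$. In Round~0 the adversary releases a slowly increasing geometric sequence $v_0,v_1,\ldots,v_n$ with $v_j = \varphi^{j/(3n)}$ (with $n=\lceil 1/\eps'\rceil$ for a suitable $\eps'$), covering a factor of $\varphi^{1/3}$ in small steps, in the same spirit as the sequence used in Theorem~\ref{thm:1tradeLB}. If ONL does not buy anywhere during Round~0, OPT will later use one trade to capture the full $\varphi^{1/3}$ movement that ONL missed. If ONL does buy at some $v_j$, the adversary instead pivots: subsequent prices reveal that the true $m$ lies strictly below $v_j$, so OPT could have bought lower and sold higher than ONL's trade, contributing at least a $\varphi^{1/3}$ per-trade ratio advantage that will be maintained into the remaining rounds.

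Second, I would run the round structure of Theorem~\ref{thm:ktradeLB} starting from Round~1, but with the invariants shifted: before round $i$ the accumulated ratio should be $\varphi^{(2i-1)/3}$ rather than $\varphi^{2(i-1)/3}$, reflecting the $\varphi^{1/3}$ carried over from Round~0. Each of rounds $1,\ldots,k-1$ then contributes a further $\varphi^{2/3}$, and the final round $k$ uses the Theorem~\ref{thm:1tradeLB} sequence to contribute $\varphi^{1-\eps}$, with the observation following Theorem~\ref{thm:1tradeLB} justifying that remaining ONL trades in round $k$ do not help. Multiplying: $\varphi^{1/3}\cdot\varphi^{2(k-1)/3}\cdot\varphi^{1-\eps}=\varphi^{(2k+2)/3-\eps}$, as required.

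The main obstacle will be the precise design and case analysis of Round~0: it must extract exactly $\varphi^{1/3}$ against every deterministic ONL strategy (including ones that buy very early or very late in the geometric staircase), while leaving the adversary enough freedom in subsequent rounds to realize a consistent pair $(m,M)$ with $M/m=\varphi$ that covers every price it will later release. A secondary technical point is to verify the shifted invariant holds cleanly at the Round~0/Round~1 boundary regardless of which Round~0 branch was taken, so that the argument of Theorem~\ref{thm:ktradeLB} plugs in unchanged from that point on.
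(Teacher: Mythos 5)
Your high-level intuition---exploit ONL's ignorance of $m$ and $M$ to squeeze one extra factor of $\varphi^{1/3}$ out of the Theorem~\ref{thm:ktradeLB} construction---is the right one, and your target arithmetic is consistent, but the round/trade accounting in your plan does not close. You propose a Round~0 contributing a ratio factor of $\varphi^{1/3}$, followed by $k-1$ intermediate rounds contributing $\varphi^{2/3}$ each, followed by the final round contributing $\varphi^{1-\eps}$. Each intermediate round and the final round consumes one OPT trade, and your Round~0 also has OPT ``use one trade to capture the $\varphi^{1/3}$ movement''; that is $1+(k-1)+1=k+1$ OPT trades, one more than OPT is allowed. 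The alternative reading, that Round~0's factor is folded into a later OPT trade, cannot work either: a single OPT trade gains at most $\varphi$, and in the branch of the intermediate rounds that fixes the invariant OPT already realizes that maximum (buying at $m$ and selling at $M$), so there is no slack in a later trade to absorb an extra $\varphi^{1/3}$. Finally, the mechanism by which a short staircase from $1$ to $\varphi^{1/3}$ forces a $\varphi^{1/3}$ disadvantage on \emph{every} deterministic ONL is unspecified; an ONL that simply never trades in Round~0 ends it with gain exactly $1$, and OPT, having also not traded, has gained nothing over it yet.

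The paper closes this gap differently: it keeps the total at $k$ OPT trades by making the \emph{first} round special and worth a full factor of $\varphi$ rather than $\varphi^{2/3}$, which is exactly the extra $\varphi^{1/3}$ you are after. The first price is $1$; if ONL declines to buy, the range is revealed to be $[1,\varphi]$ and OPT's single trade captures $\varphi$ while ONL gains nothing; if ONL buys at $1$, the range is revealed to be $[1/\varphi,1]$, so ONL bought at the very top, and an oscillation $1/\varphi,1,1/\varphi,1,\ldots$ forces ONL to gain at most $1$ while OPT gains $\varphi$ (or $\varphi$ per remaining trade if ONL never sells). Either way one OPT trade yields ratio $\varphi$, and then $k-2$ (not $k-1$) intermediate rounds plus the final round give $\varphi\cdot\varphi^{2(k-2)/3}\cdot\varphi^{1-\eps}=\varphi^{(2k+2)/3-\eps}$ using exactly $k$ OPT trades. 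To salvage your outline, replace your Round~0 and Round~1 by a single round of this kind; the geometric staircase is needed only in the last round, exactly as in Theorem~\ref{thm:ktradeLB}.
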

\begin{proof}
Again there will be a number of rounds.
Round 1 is special, in that OPT will get a factor of $\varphi$ better than ONL
but will afterwards reveal knowledge of $m$ and $M$.
Rounds 2 to $k$ are then similar to Theorem~\ref{thm:ktradeLB}.

Round 1:
The first price is $1$. If ONL does not buy, then the price goes up to
$\varphi$. OPT makes one trade and gains $\varphi$. Now we know the range
is [$1..\varphi$], and we can assume w.l.o.g. that ONL does not buy at $\varphi$. 
Then the round ends. At the end of this round, both OPT and ONL are not holding
stock, OPT made one trade and ONL none, but ONL
is a factor of $\varphi$ behind in the return.

Otherwise, if ONL buys at $1$, then the subsequent price sequence is 
$1/\varphi, 1$, $1/\varphi, 1, \ldots$ for up to $k$ such pairs, until ONL sells. 
Now we know the range is [$1/\varphi..1$]. Without loss of generality 
we can assume ONL does not sell at $1/\varphi$ since it is clearly the lowest
possible price. If ONL does not sell at any point, then the game ends
with no further rounds. OPT makes $k$ trades gaining $\varphi^k$, and ONL's gain
is 1. The competitive ratio is $\varphi^k$, which is at least 
$\varphi^{(2k+2)/3}$. 
If ONL sells at some point with price $1$, then the sequence stops and 
this round ends. OPT buys at $1/\varphi$ and sells at $1$, getting a gain of 
$\varphi$. ONL's gain is 1. Both OPT and ONL used one trade, and 
OPT is a factor of $\varphi$ ahead of ONL.

Each of rounds 2 to $k-1$ are the same as the intermediate rounds
in Theorem~\ref{thm:ktradeLB}, with OPT gaining a factor of $\varphi^{2/3}$ 
ahead of ONL in each round.

Finally, in round $k$ we use the same price sequence in 
Theorem~\ref{thm:1tradeLB},
which gives an extra factor of $\varphi^{1-\eps}$. Note that ONL may have 
more trades left then OPT (in addition to the same reason as in 
Theorem~\ref{thm:ktradeLB}, in round 1 ONL may have done no trade), 
but again it is not useful for ONL.
\qed \end{proof}

\section{Bounded daily return, known duration}

Recall that in this model, the prices are bounded by 
$p(i)/\beta \le p(i+1) \le \alpha p(i)$ for some $\alpha, \beta>1$.
Trades can take place at days $0, 1, \ldots, T$.

\begin{theorem} \label{thm:dailyLB}
No deterministic algorithm has a competitive ratio better than
$\alpha^{T(2k \log \beta) / ((k+1) \log \beta + k \log \alpha)}$.
\end{theorem}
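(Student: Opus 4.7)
The plan is to adapt the round-based adversary construction of Theorem~\ref{thm:ktradeLB} to the bounded daily return model, with instantaneous price jumps replaced by price moves at the daily-return limits. First, I would set $r = T\log\beta/((k+1)\log\beta + k\log\alpha)$ and $f = T\log\alpha/((k+1)\log\beta + k\log\alpha)$, so that $r\log\alpha = f\log\beta$ (equivalently $\alpha^r = \beta^f$, meaning a rise of $r$ days and a fall of $f$ days have the same log-magnitude) and $(k+1)r + kf = T$. The target lower bound is then $\alpha^{2kr}$, matching the theorem's formula.

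The adversary's intended sequence oscillates between levels $1$ and $\alpha^r$, assembled from $k+1$ rise phases of $r$ days each (price rising at the $\alpha$-limit) alternating with $k$ fall phases of $f$ days each (price falling at the $\beta$-limit). The strategy would be organized as a setup rise of $r$ days followed by $k$ rounds, each consisting of a fall of $f$ days and a subsequent rise of $r$ days, so the total length is exactly $T$. The adversary is adaptive within each round: if ONL does not trade in a round, OPT captures that round's rise via one of its $k$ trades for a gain of $\alpha^r$; if ONL does trade, the adversary forces ONL into either a losing trade (buying near a peak and selling near a trough, losing a factor of $\alpha^r$) or an unfavorable holding position that the next round's analysis absorbs, mirroring the sub-case breakdown of Theorem~\ref{thm:ktradeLB}.

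The inductive invariant I would maintain is that after $i$ rounds, the ratio OPT/ONL is at least $\alpha^{2ir}$; the per-round factor of $\alpha^{2r}$ accounts jointly for OPT's gain on the round's rise and ONL's loss (or missed gain) from mis-timed trades. After $k$ rounds the ratio is $\alpha^{2kr}$, as required. The last round is likely to need slight special treatment analogous to the final round of Theorem~\ref{thm:ktradeLB}, possibly invoking a bounded-daily-return analog of the construction in Theorem~\ref{thm:1tradeLB} to extract the remaining factor cleanly.

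The main obstacle I foresee is the careful time accounting. Unlike the known-$m,M$ model where the adversary can instantly drop the price to $m$, here every adversarial reaction must be spread over $r$ or $f$ days. The balance $\alpha^r = \beta^f$ is what makes each rise-fall cycle return the price to its starting level (so that rounds compose cleanly), and the budget equation $(k+1)r + kf = T$ is what makes the $k$-round pattern fill exactly $T$ days. Verifying that the adversary's adaptive reactions can always be completed within the remaining time budget, regardless of the online algorithm's action choices, is where the argument becomes delicate.
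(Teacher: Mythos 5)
Your parameter calculus is right---the balance $\alpha^r=\beta^f$ and the budget $(k+1)r+kf=T$ are exactly the quantities appearing in the paper's proof as $t_1$ and $t_2$---but the adversary you describe does not work, and the missing piece is the heart of the argument. A pre-planned oscillation of $k+1$ rises of $r$ days and $k$ falls of $f$ days is transparent to the online player: ONL can simply buy at the end of each fall and sell at the end of each rise, capturing the same rises as OPT and driving the ratio to $1$. Everything therefore hinges on the ``adaptive reactions'' you defer, and these cannot be ported from Theorem~\ref{thm:ktradeLB}: that construction punishes ONL with an instantaneous drop to $m$ the moment it buys, whereas here any punishment must be spread over many days at rate $\beta$, which wrecks your fixed phase lengths and your time budget. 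You name this difficulty yourself but do not resolve it, and there is no evident way to resolve it within your round structure.

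The paper's adversary is both simpler and genuinely different: it is purely reactive, with no pre-planned phases at all. On every day, if ONL is not holding stock the price rises by the factor $\alpha$; if ONL is holding, it falls by the factor $\beta$. Consequently ONL loses a factor $\beta$ per day over every holding period, the rises occur exactly over ONL's non-holding periods, and OPT---which would like to hold during the complementary intervals but is limited to $k$ trades against $k+1$ rise intervals---obtains one of a short list of returns depending on which rise it skips or which fall it rides through. The durations $t_1,\dots,t_{2k+1}$ are then chosen by \emph{ONL}, not by the adversary, and a balancing argument shows that the best ONL can do is equalize all of OPT's options, which forces $\alpha^{t_1}=\beta^{t_2}$ with $t_1=r$ and $t_2=f$ and yields the ratio $\alpha^{2kr}$. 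In other words, the values $r$ and $f$ you fix in advance are the \emph{output} of ONL's optimization against the reactive adversary, not inputs the adversary can impose. To repair your proof you would need to discard the round structure, adopt the state-dependent price rule, and carry out this minimax computation.
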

\begin{proof}
The adversary strategy is very simple and natural: whenever ONL is
not holding stock, the price goes up by a factor of $\alpha$ every day,
and while it is holding stock it goes down by $\beta$ every day.
Let the ONL trades be $(b_i, s_i)$, $i=1,\ldots,k$.
(If there are fewer than $k$ trades, dummy ones with $b_i=s_i$ can be
added.)
For $1 \le i \le k+1$, define $t_{2i-1} = b_i - s_{i-1}$ and
$t_{2i} = s_i - b_i$. (For convenience define $s_0=0$ and $b_{k+1}=T$.)
ONL's return is $1 / (\beta^{t_2} \beta^{t_4} \cdots \beta^{t_{2k}})$.
OPT's optimal actions, if allowed $k+1$ trades, is to hold during the
exact opposite intervals as ONL, i.e., buy at $s_i$ and sell at $b_{i+1}$
for $0 \le i \le k$. But since it can make at most $k$ trades,
its possible course of actions include skipping one of those trades, or making
one of the trades `span across two intervals', e.g., buying at $s_i$ and selling
at $b_{i+2}$. Thus OPT's return is one of
\[ \alpha^{t_3} \alpha^{t_5} \cdots \alpha^{t_{2k+1}},
\alpha^{t_1} \alpha^{t_5} \cdots \alpha^{t_{2k+1}},
\ldots,
\alpha^{t_1} \alpha^{t_3} \cdots \alpha^{t_{2k-1}}, \]
\[ \alpha^{t_1} \alpha^{t_3} \cdots \alpha^{t_{2k+1}}/\beta^{t_2},
\alpha^{t_1} \alpha^{t_3} \cdots \alpha^{t_{2k+1}}/\beta^{t_4},
\ldots,
\alpha^{t_1} \alpha^{t_3} \cdots \alpha^{t_{2k+1}}/\beta^{t_{2k}}.
\]

To attain the worst competitive ratio, these returns should be equal, which means
$t_1=t_3=\cdots=t_{2k+1}$ and $t_2=t_4=\cdots=t_{2k}$. This further implies
$\alpha^{k t_1} = \alpha^{(k+1)t_1}/\beta^{t_2}$,
which gives $\alpha^{t_1} = \beta^{t_2}$.
Together with $t_1+\cdots+t_{2k+1} = (k+1)t_1 + k t_2 = T$,
this gives
\[ 
t_1 = \frac{\log \beta}{(k+1)\log \beta + k \log \alpha}T, \quad
t_2 = \frac{\log \alpha}{(k+1)\log \beta + k \log \alpha}T
\] 
and thus the competitive ratio is at least
\[ \alpha^{k t_1} / (1/\beta^{k t_2}) = \alpha^{2k t_1} = 
\alpha^{T(2k \log \beta) / ((k+1) \log \beta + k \log \alpha)}. \]
\qed \end{proof}

\begin{algorithm}[h]
\caption{Static algorithm for known $\alpha, \beta$ and $T$.}
\label{alg:3}
\begin{algorithmic}
\STATE Set $t_1 = \frac{\log \beta}{(k+1)\log \beta + k \log \alpha}T$ and
$t_2 = \frac{\log \alpha}{(k+1)\log \beta + k \log \alpha}T$,
rounding to nearest integers.
\STATE Upon release of the $i$-th price $p(i)$:
\IF{$i=T$} 
  \IF{currently holding stock}
    \STATE sell at price $p(T)$ and the game ends
  \ENDIF
\ELSE
  \IF{have not been holding stock for $t_1$ days and not used up all trades} 
    \STATE buy at price $p(i)$
  \ELSIF{have been holding stock for $t_2$ days}
    \STATE sell at price $p(i)$
  \ENDIF
\ENDIF
\end{algorithmic}
\end{algorithm}

\begin{theorem} \label{thm:dailyUB}
Algorithm~\ref{alg:3} has competitive ratio 
$\alpha^{T(2k \log \beta) / ((k+1) \log \beta + k \log \alpha)}$.
\end{theorem}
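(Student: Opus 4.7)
The plan is to exploit the fact that Algorithm~\ref{alg:3} is \emph{oblivious}: its buy/sell decisions depend only on the day index, not on the price. So for every price sequence, ONL holds stock on the same fixed set $H$, namely the union of $k$ intervals of length $t_2$ sitting at positions $I_2, I_4, \ldots, I_{2k}$ within the decomposition of $[0,T]$ into alternating blocks $I_1, I_2, \ldots, I_{2k+1}$ of lengths $t_1, t_2, t_1, t_2, \ldots, t_1$.

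The first step I would take is to pass to log-prices. Writing $a_i = \log p(i+1) - \log p(i) \in [-\log\beta,\log\alpha]$, ONL's log-return equals $\sum_{i\in H} a_i$ and OPT's log-return equals $\max_G \sum_{i\in G} a_i$, where $G$ ranges over unions of at most $k$ disjoint intervals. Because $H$ does not depend on the price sequence, the max over adversarial $a$ can be swapped with the max over OPT's choice $G$: for each fixed $G$ the adversary's optimum is to set $a_i = \log\alpha$ on $G\setminus H$ and $a_i = -\log\beta$ on $H \setminus G$. This reduces the problem to the purely combinatorial bound
\[ \max_{G}\ \bigl( |G\setminus H|\log\alpha + |H\setminus G|\log\beta \bigr) \ \le\ 2kt_1\log\alpha, \]
which exponentiates to the claimed competitive ratio.

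To prove this inequality I would introduce the potential $\psi(t) = \log\alpha\cdot |[0,t]\setminus H| - \log\beta\cdot |[0,t]\cap H|$. The algorithm's parameter choice ensures $t_1\log\alpha = t_2\log\beta =: \tau$, and this makes $\psi$ oscillate between $0$ and $\tau$: it climbs from $0$ to $\tau$ across each odd block $I_{2j-1}$ and descends from $\tau$ back to $0$ across each even block $I_{2j}$. Rewriting $|H\setminus G|\log\beta = k\tau - |G\cap H|\log\beta$, the quantity to be maximized becomes $k\tau + \sum_{i=1}^k \bigl(\psi(s_i) - \psi(b_i)\bigr)$ where $(b_i,s_i)$ are the $k$ intervals of $G$; since $\psi \in [0,\tau]$, each summand is at most $\tau$, giving the total bound $2k\tau = 2kt_1\log\alpha$.

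The main care I expect is the boundary bookkeeping: $t_1, t_2$ are rounded to integers and $T$ need not exactly equal $(k+1)t_1 + kt_2$, so a small leftover (a possibly shorter final block, or a forced sale on day $T$) must be absorbed as an $O(\log\alpha + \log\beta)$ additive error in the exponent --- this does not affect the leading exponent in the statement. The conceptual content is the symmetry $t_1\log\alpha = t_2\log\beta$: it makes every one of OPT's strategic options (taking separate single-block trades or bridging an even block with one extended trade) tie at the same bound $2k\tau$, which matches the lower bound of Theorem~\ref{thm:dailyLB}.
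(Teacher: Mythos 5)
Your proof is correct, but it takes a genuinely different route from the paper's. The paper's proof is an exchange argument on the price sequence itself: day by day, it perturbs the prices (rescaling all earlier days) so that the price drops maximally by $\beta$ whenever ONL is holding and rises maximally by $\alpha$ whenever it is not, checks that each perturbation can only increase the competitive ratio, and thereby reduces the worst case to the explicit zig-zag sequence already analysed in the proof of Theorem~\ref{thm:dailyLB}, whose OPT/ONL ratio computation is then reused. You instead work in log-space: since the algorithm is oblivious, the holding set $H$ is fixed, so $\max_a\max_G$ can be swapped to $\max_G\max_a$, the inner maximum is solved pointwise ($a_i=\log\alpha$ on $G\setminus H$, $a_i=-\log\beta$ on $H\setminus G$), and the resulting combinatorial quantity is bounded by the potential $\psi$, whose oscillation in $[0,\tau]$ (a direct consequence of $t_1\log\alpha=t_2\log\beta$) gives at most $\tau$ per OPT interval plus $k\tau$ from the $|H|\log\beta$ term, i.e.\ $2k\tau$ in total. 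I checked the key steps: the max-swap is legitimate precisely because ONL's decisions do not depend on the prices, and the potential bound $k\tau+\sum_i(\psi(s_i)-\psi(b_i))\le 2k\tau$ is right. Your approach buys a self-contained, single-shot bound that does not lean on the lower-bound construction and makes the role of the balance condition $t_1\log\alpha=t_2\log\beta$ completely transparent (every OPT strategy ties at $2k\tau$); the paper's approach buys reuse of the adversary analysis and a transformation lemma that it can also deploy (and does, in property (2)) in the analysis of the adaptive stop-loss Algorithm~\ref{alg:4}, where obliviousness fails and your max-swap would no longer apply. Both treatments wave away the integer rounding of $t_1,t_2$; your explicit note that this costs only an additive $O(\log\alpha+\log\beta)$ in the exponent is, if anything, slightly more careful than the paper's.
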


\begin{proof}
In what follows we ignore the roundings on $t_1$ and $t_2$.
We argue that the worst case price sequence is exactly the one 
described in the proof of Theorem~\ref{thm:dailyLB}.
Consider an arbitrary day $i>0$.
Suppose at day $i$ ONL is holding stock or selling.
If $p(i-1)/\beta < p(i)$, i.e., the price change from day $i-1$ to $i$ is
not the maximum possible drop, we raise $p(i-1)$ to $\beta p(i)$ and make
a corresponding change to all earlier prices, i.e., multiply each of them
by a factor of $\beta p(i) / p(i-1)$. 

ONL's buy/sell decisions are unchanged as they do not depend on the 
prices at all. For any ONL trade completed (bought and sold)
on or before day $i-1$, their gains are
unaffected since both buying and selling prices are multiplied by the
same factor. For the one trade where it is holding or selling at day $i$,
ONL loses by a factor of $\beta p(i) / p(i-1)$
since the buying price of this trade is raised but the selling
price is not. All future trades are unaffected.
For the moment assume OPT's trading decisions also remain unchanged.
If OPT is holding or selling at day $i$, then it suffers
the same change as ONL, so the competitive ratio is unchanged.
Otherwise its gain is not affected and hence the competitive ratio
increases.

Similarly, suppose at day $i$ ONL is buying or is not holding stock.
If $p(i-1)\alpha > p(i)$, i.e., the price change from day $i-1$ to $i$ is
not the maximum possible rise, we lower $p(i-1)$ to $p(i)/\alpha$ and make
a corresponding change to all earlier prices, i.e., multiply each of them
by a factor of $p(i) / (\alpha p(i-1))$. 
For any ONL trade already completed before day $i$, its gain is
unaffected since both buying and selling prices are multiplied by the
same factor. All future trades are unaffected.
For OPT, if it is buying or holding at day $i-1$, then it gains 
from the lower buying price while the selling price (on or after day $i$)
is unchanged. Otherwise its gain is unaffected as in ONL.
Hence the competitive ratio can only increase.

Applying this to each day successively,
we can without loss of generality assume the prices follow a zig-zag pattern
as described in the proof of Theorem~\ref{thm:dailyLB}. 
Now, the optimal trades for this new price sequence may not be the same as
the original sequence, but that can only improve OPT's return.
The argument in the proof of Theorem~\ref{thm:dailyLB} establishes the
ratio between OPT and ONL for such a zig-zag price sequence. 
This proves the upper bound for our algorithm.
\qed \end{proof}

Algorithm 3 may feel unnatural since it does not depend 
on the price sequence at all
(this is called `static' in \cite{CKLW01}). 
But we prove that the following variation
of the algorithm has the same competitive ratio: it sells only when 
the current price falls below $h/\beta^{t_2}$ 
where $h$ is the highest price seen since the last purchase. 
This coincides with the `stop loss' strategy very common in real trading
(more precisely `trailing stop' \cite{GI95}
where the stop loss limit is not fixed but
tracks the highest price seen thus far, to potentially capture the most profit).

\begin{algorithm}[h]
\caption{Stop loss based algorithm for known $\alpha, \beta$ and $T$.}
\label{alg:4}
\begin{algorithmic}
\STATE Set $t_1$ and $t_2$ as in Algorithm~\ref{alg:3}.
\STATE Upon release of the $i$-th price $p(i)$:
\IF{$i=T$} 
  \IF{currently holding stock}
    \STATE sell at price $p(T)$ and the game ends
  \ENDIF
\ELSE
  \IF{have not been holding stock for $t_1$ days and not used up all trades} 
    \STATE buy at price $p(i)$
    \STATE set $h = p(i)$
  \ELSIF{currently holding stock}
    \STATE set $h = \max(h, p(i))$
    \STATE sell at price $p(i)$ if $p(i) < h / \beta^{t_2}$
  \ENDIF
\ENDIF
\end{algorithmic}
\end{algorithm}
 
\begin{theorem}
Algorithm~\ref{alg:4} has competitive ratio 
$\alpha^{T(2k \log \beta) / ((k+1) \log \beta + k \log \alpha)}$.
\end{theorem}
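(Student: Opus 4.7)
The lower bound is immediate from Theorem~\ref{thm:dailyLB}, which rules out any deterministic algorithm doing better than the stated ratio. So my plan focuses on the matching upper bound, following the reduction-to-zig-zag strategy used for Theorem~\ref{thm:dailyUB}. The anchor case is that on the zig-zag sequence of Theorem~\ref{thm:dailyLB}---price rises by $\alpha$ each day ONL is not holding and falls by $\beta$ each day ONL is holding---Algorithm~\ref{alg:4} produces the same trades as Algorithm~\ref{alg:3}. The buy rule is identical and time-based, and inside each holding interval the running peak $h$ is pinned to the purchase price $p(b)$ since the price only drops, so the stop-loss condition $p(i)<h/\beta^{t_2}$ fires exactly once the price has fallen by $\beta^{t_2}$ from the buy, which is precisely when Algorithm~\ref{alg:3} sells. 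Hence on this canonical sequence the claimed ratio follows directly from Theorem~\ref{thm:dailyUB}.

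Next I plan to push any input sequence into zig-zag form day by day without decreasing the competitive ratio. On a day where ONL is not holding, the buy rule is price-independent, so the original rescaling of Theorem~\ref{thm:dailyUB} transplants verbatim: ONL's trades stay the same and OPT weakly improves. The interesting case is a holding day $i$ at which the drop is sub-maximal; following Theorem~\ref{thm:dailyUB}, I rescale all prices up through day $i-1$ by $f=\beta p(i)/p(i-1)>1$. Both $p(b)$ and the historical peak $h_{i-1}$ scale by the same $f$, so the stop-loss comparison at every day $j<i$ is invariant and ONL reproduces every earlier decision. For the trade spanning day $i$, when ONL still sells on the same day $s\ge i$ as originally, the gain shrinks by exactly the factor $1/f$ that Theorem~\ref{thm:dailyUB} already accounts for, and OPT's gain either scales the same way (if OPT also spans day $i$) or is preserved, so the ratio weakly increases.

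The main obstacle is that the new peak $h_i'$ may strictly exceed $h_i$, potentially causing the stop-loss to fire newly at day $i$ when ONL would have continued to hold in the original sequence. I plan to handle this by induction on the number of remaining trades: an earlier sale at day $i$ merely schedules the next buy $t_1$ days later, so the cascade on $[i,T]$ reduces to a sub-instance of the same problem with fewer days and no more permitted trades, and the inductive hypothesis bounds the competitive ratio on that suffix. Combined with the gain already accrued before day $i$---which by the invariance argument above is no larger than in the original---the overall ONL return does not improve, while OPT's return on the transformed suffix is no smaller. Iterating over every day reduces any instance to the zig-zag, on which Algorithm~\ref{alg:4} coincides with Algorithm~\ref{alg:3} and the ratio of Theorem~\ref{thm:dailyUB} applies. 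The hardest technical step is this inductive coupling around a newly triggered stop-loss, which is exactly where Algorithm~\ref{alg:4}'s price-dependence diverges from Algorithm~\ref{alg:3}'s purely time-based rule; away from that point the transformation argument carries over unchanged.
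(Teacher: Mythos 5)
Your reduction-to-zig-zag plan breaks down exactly at the point you flag as the ``hardest technical step,'' and the induction you sketch there does not close the gap. When the upward rescaling of the prefix raises the running peak $h$ and newly triggers the stop-loss at day $i$, three things go wrong. First, the direction of the inequality is not controlled: if the price continues to fall after day $i$, the newly triggered earlier sale makes ONL strictly \emph{better} off, so the transformation can \emph{decrease} the competitive ratio; the monotone chain of transformations needed to conclude that the zig-zag is the worst case is therefore broken, and asserting that ``the overall ONL return does not improve'' is not justified. Second, the suffix on $[i,T]$ is not an instance of the same problem: $t_1$ and $t_2$ are calibrated to the full horizon $T$, so the claimed ratio for a shorter horizon is a different quantity and the inductive hypothesis does not apply in the form you invoke it. Third, the competitive ratio does not factor across the cut at day $i$ as a product of a prefix bound and a suffix bound, because OPT may redistribute its $k$ trades across the cut and one of its trades may straddle it; you would need to track trade counts on each side, which the sketch does not do.

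The paper avoids all of this by \emph{not} reducing Algorithm~\ref{alg:4} to the zig-zag sequence. It applies the rescaling only during non-holding periods, where the algorithm's behaviour is genuinely price-independent, and handles holding periods by a direct charging argument: the trailing stop guarantees that within any holding period the price never falls more than a factor $r=\beta^{t_2}=\alpha^{t_1}$ below the running maximum, which bounds the gain of any OPT trade in terms of the ONL trades whose holding periods it spans; OPT trades sharing a holding period are then merged at a cost of $r$ each, and a partition of the timeline yields $r^{x+y}$ per group with $x$ OPT trades and $y$ ONL trades, hence $r^{2k}$ overall. This is a structurally different proof, and the difference is forced: the price-dependence of the stop-loss rule is precisely what makes the day-by-day rescaling argument of Theorem~\ref{thm:dailyUB} inapplicable to holding days. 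As written, your proposal identifies the right obstacle but does not overcome it.
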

\begin{proof}
Recall that $\alpha^{t_1} = \beta^{t_2}$. Let $r$ denote this common 
value, and 
the competitive ratio we want to prove is then equal to $r^{2k}$.
Roughly speaking, our approach is to partition the time horizon so that, in each 
partition, $x$ trades in OPT are associated to $y$ trades in ONL such that 
the ratio between their gains is at most $r^{x+y}$. Since each of OPT and
ONL makes at most $k$ trades, this proves the theorem.

Suppose ONL completed $\ell$ trades, $\ell \le k$.
Denote by $(b_i, s_i)$ ONL's $i$-th trade.
For technical reasons, also define $b_0=s_0=0$ and $b_{\ell+1}=s_{\ell+1}=T$.
Let $H_i = [b_i, s_i]$, $0 \le i \le \ell$, be the $i$-th holding period,
i.e., the (closed) time interval where ONL is holding stock 
in its $i$-th trade,
and let $N_i = (s_{i-1}, b_i)$, $1 \le i \le \ell+1$,
be the (open) $i$-th non-holding period.
$H_0$ is the holding period for the trivial trade $(b_0, s_0)$ 
defined for analysis purposes only.
If $H_{\ell}$ does not contain time $T$ (i.e., ONL is not holding stock till
the end where a forced sale happens), also define 
$H_{\ell+1} = [b_{\ell+1}, s_{\ell+1}]$ for the trivial trade
$(b_{\ell+1}, s_{\ell+1})$.
Each $N_i$ has length exactly $t_1$, and each $H_i$ (except $H_0$ and
$H_{\ell+1}$) has length at least $t_2$.
Let $h_i$ be the highest price during $H_i$.
We first show the following properties:

\begin{enumerate}

\item[(1)] For any two days $x$ and $y$ in the same $H_i$, where $x<y$, 
we have $p(y) \ge p(x)/r$. 
This is because at day $y$ the highest price seen thus far is at least $p(x)$, 
and so the stop loss threshold is at least $p(x)/r$. 
(In fact $p(y)$ could be up to one factor of $\beta$
smaller since ONL sells as soon as this limit is reached, but this difference
can be ignored.) As a direct consequence, $p(s_i) \ge h_i/r$.
(If day $T$ arrives before the threshold is reached then
the final sale is made at a price higher than this.)

\item[(2)] Without loss of generality we can assume 
$p(b_{i+1}) = p(s_i) r$.
The reason is that during any $N_i$, the price sequence
can be transformed so that it goes up by $\alpha$ every day, 
by the same argument as in Theorem~\ref{thm:dailyUB}, since
ONL always waits the same number of days before the next purchase and is
independent of price changes during this period.
Then the price rises by a factor of $\alpha^{t_1}$ over this non-holding period.

\item[(3)] OPT would not buy or sell strictly within any $N_i$. 
This follows from the argument in (2), since it is easy to see that OPT 
only buys at local minima and sells at local maxima.

\end{enumerate}

Consider an OPT trade $(b^*,s^*)$.
Suppose $b^*$ falls within $H_u$ and $s^*$ falls within $H_v$, where $v \ge u$. 
Its gain $g^*$ is equal to
\[
g^* = \frac{p(s^*)}{p(b^*)} \le \frac{h_v}{p(b_u)/r}
\le \frac{p(s_v) r }{ p(b_u)/r }
= \frac{p(s_v)}{p(b_u)} r^2
\]
where the inequalities are due to (1).
Note that if $u=0$, then $p(b^*)=p(b_u)$ and thus a factor of $r$ can be
removed from the above bound; similarly if $v=\ell+1$ then $p(s_v)=h_v$
and another factor of $r$ can be removed.
Thus, if we define an indicator variable $I_i$ which is 0 if $i=0$ or $i=\ell+1$
and 1 otherwise, then
\[ g^* \le \frac{p(s_v)}{p(b_u)} r^{I_u+I_v} \]

Then
\[
g^* \le \frac{p(s_v)}{p(b_v)} \frac{p(b_v)}{p(s_{v-1})} 
  \frac{p(s_{v-1})}{p(b_{v-1})} \frac{p(b_{v-1})}{p(s_{v-2})} \ldots
  \frac{p(s_{u+1})}{p(b_{u+1})} \frac{p(b_{u+1})}{p(s_u)} \frac{p(s_u)}{p(b_u)} 
  r^{I_u+I_v}
\] \[ 
= \frac{p(s_v)}{p(b_v)} r \frac{p(s_{v-1})}{p(b_{v-1})} r \ldots 
r \frac{p(s_u)}{p(b_u)} r^{I_u+I_v}
= r^{v-u+I_u+I_v} \prod_{i=u}^v g_i \]
where $g_i = p(s_i)/p(b_i)$ is the gain of the $i$-th ONL trade.
The second last equality is due to (2).

If no other OPT trade sells during $H_u$ or buys during $H_v$,
we can associate this OPT trade with these $v-u+1$ ONL trades
$(b_u, s_u), \ldots, (b_v, s_v)$. 
No other OPT trades would be associated with these ONL trades.
But if other OPT trades fall within $H_u$ or $H_v$ then this cannot be done
directly. Instead we consider groups of OPT trades, determined as follows.
Suppose there are two OPT trades $(b^*_1,s^*_1)$ and 
$(b^*_2, s^*_2)$ such that 
the earlier one sells at the same holding period as the later one buys, i.e., 
$b^*_1$ is in $H_u$, 
$s^*_1$ and $b^*_2$ are in the same $H_v$,
and $s^*_2$ is in $H_w$, where $u \le v \le w$. 
Then $p(b^*_2) \ge p(s^*_1)/r$ due to (1), and so
$(p(s^*_2)/p(b^*_2))(p(s^*_1)/p(b^*_1)) \le (p(s^*_2)/p(b^*_1)) r$,
and by losing a factor of $r$ we can replace the two OPT trades 
with one $(b^*_1,s^*_2)$. 

By repeatedly applying the argument, we can partition the time horizon into
disjoint parts: in part $j$, OPT has $x_j$ trades $(b^*_{j,1}, s^*_{j,1}), \ldots,
(b^*_{j,x_j}, s^*_{j,x_j})$, $b^*_{j,1}$ falls within some $H_{u_j}$ and 
$s^*_{j,x_j}$ falls within some $H_{v_j}$, 
and no other OPT trade sells during $H_{u_j}$ or buys during $H_{v_j}$ (otherwise
they would have been merged into this as well).
ONL made a total of $y_j=v_j-u_j+1$ trades here, each of gain $g_{j,i}$.
The $x_j$ OPT trades, each of gain $g^*_{j,i}$, have been merged into one trade 
$(b^*_{j,1}, s^*_{j,x_j})$ of gain $g^*_j$, such that
$g^*_j \ge \prod_i g^*_{j,i}/r^{x_j-1}$. 
Since $g^*_j \le r^{v_j-u_j+I_{u_j}+I_{v_j}} \prod_{i=u_j}^{v_j} g_{j,i}$, we have 
\[
\prod_i g^*_{j,i} \le r^{x_j-1} r^{v_j-u_j+I_{u_j}+I_{v_j}}  
   \prod_{i=u_j}^{v_j} g_{j,i}
= r^{x_j+y_j} r^{I_{u_j}+I_{v_j}-2} \prod_{i=u_j}^{v_j} g_{j,i}
\]

Over all partitions, therefore, the overall gains ratio between OPT and ONL is 
$\prod_j \left( \prod_i g^*_{j,i} / \prod_{i=u_j}^{v_j} g_{j,i} \right)
\le \prod_j r^{x_j+y_j} r^{I_{u_j}+I_{v_j}-2}$.

If none of $u_j$ or $v_j$ involve the two trivial holding periods $H_0$ and
$H_{\ell+1}$, then all $I_{u_j}$ and $I_{v_j}$ are equal to 1, and
$\sum_j x_j \le k$, $\sum_j y_j \le \ell$. Hence
\[ \prod_j r^{x_j+y_j} r^{I_{u_j}+I_{v_j}-2}
\le r^{k+\ell} r^{1+1-2} \ldots r^{1+1-2} \le r^{2k} \]

If one trivial holding period is involved, e.g., $b^*_{1,1}$ falls into $H_0$
and thus $u_1=0$, then $I_{u_1}=0$ and $\sum_j y_j \le \ell+1$. Then
\[ \prod_j r^{x_j+y_j} r^{I_{u_j}+I_{v_j}-2}
\le r^{k+\ell+1} r^{0+1-2} r^{1+1-2} \ldots r^{1+1-2} \le r^{2k} \]

Finally if both trivial holding periods are involved, i.e., $b^*_{1,1}$ falls into 
$H_0$ and the last sale falls into $H_{\ell+1}$, then
\[ \prod_j r^{x_j+y_j} r^{I_{u_j}+I_{v_j}-2}
\le r^{k+\ell+2} r^{0+1-2} r^{1+1-2} \ldots r^{1+1-2} r^{1+0-2} \le r^{2k} \]
\qed \end{proof}

\section{Conclusion}

There are still many possible directions that can be explored. Some examples
include randomized algorithms, allowing short-selling, trading two or more
stocks, or allowing the money to be split into a small number of bets 
(e.g., the investor can sell half of its shares first, and potentially let
the other half to earn more profit). The last one models the intermediate case 
between the splittable and unsplittable versions of the problem.

\end{document}